\newtheorem{theorem}{Theorem}[section]
\newtheorem{lemma}[theorem]{Lemma}
\newenvironment{lemma-repeat}[1]{\begin{trivlist}
\item[\hspace{\labelsep}{\bf\noindent Lemma \ref{#1} }]\em }%
{\end{trivlist}}
\newenvironment{theorem-repeat}[1]{\begin{trivlist}
\item[\hspace{\labelsep}{\bf\noindent Theorem \ref{#1} }]\em }%
{\end{trivlist}}
\newcommand*\samethanks[1][\value{footnote}]{\footnotemark[#1]}
\newcommand{\Deal}{request}
\newcommand{\Vault}{vault}
\newcommand{\Bank}{bank}
\newcommand{\Budget}{budget}
\newcommand{\InCover}{\texttt{InCover}}
\newcommand{\NotInCover}{\texttt{NotInCover}}
\newcommand{\ceil}[1]{\lceil #1 \rceil}
\begin{document}
\begin{titlepage}
%\title{A Distributed $(2+\epsilon)$-Approximation for Minimum Weight\\ Vertex Cover in $O(\log{\Delta}/(\log\log{\Delta}+\log{\epsilon}))$ Rounds}
\title{A Distributed $(2+\epsilon)$-Approximation for\\ Vertex Cover in $O\left(\frac{\log{\Delta}}{\epsilon\log\log{\Delta}}\right)$ Rounds}
\author{Reuven Bar-Yehuda\thanks{Technion, Department of Computer Science, \texttt{reuven@cs.technion.ac.il}, \texttt{ckeren@cs.technion.ac.il}, \texttt{gregorys@cs.technion.ac.il}. Supported in part by the Israel Science Foundation (grant 1696/14).}
\and Keren Censor-Hillel\samethanks \and Gregory Schwartzman\samethanks}
\maketitle

\begin{abstract}
We present a simple deterministic distributed $(2+\epsilon)$-approximation algorithm for minimum weight vertex cover, which completes in $O(\log{\Delta}/\epsilon\log\log{\Delta})$ rounds, where $\Delta$ is the maximum degree in the graph, for any $\epsilon>0$ which is at most $O(1)$. For a constant $\epsilon$, this implies a constant approximation in $O(\log{\Delta}/\log\log{\Delta})$ rounds, which contradicts the lower bound of [KMW10].
\end{abstract}

%\vspace{10cm}
%\noindent Regular Submission.\\
%Eligible for the Best Student Paper Award: Gregory Schwartzman is a full-time M.Sc. student.
\thispagestyle{empty}
\end{titlepage}

\section{Introduction}
We present a simple deterministic distributed $(2+\epsilon)$-approximation algorithm for minimum weight vertex cover (MWVC), which completes in $O(\log{\Delta}/\epsilon\log\log{\Delta})$ rounds, where $\Delta$ is the maximum degree in the graph, for any $\epsilon>0$ which is at most $O(1)$, and in particular $o(1)$. If $\Delta \leq 16$ then our algorithm simply requires $O(1/\epsilon)$ rounds.
Our algorithm adapts the \emph{local ratio} technique~\cite{BarYehudaE1985} to the distributed setting in a novel simple manner. Roughly speaking, in the simplest form of this technique, one repeatedly reduces the same amount of weight from both endpoints of an arbitrary edge, while not going below zero for any vertex. Terminating this process at the time in which for every edge there is at least one endpoint with no remaining weight, gives that the set of vertices with no remaining weight is a $2$-approximation for MWVC. This can be extended to produce a $(2+\epsilon)$-approximation if instead the process terminates at the time in which for every edge there is at least one endpoint with a remaining weight of at most an $\epsilon'$ fraction of its initial weight, where $\epsilon'=\epsilon/(\epsilon+2)$.

The challenge in translating this framework to the distributed setting is that the weights we can reduce from endpoints of neighboring edges must depend on each other. This is because we need to make sure that no weight goes below zero. However, as common to computing in this setting, we cannot afford long chains of dependencies, as these directly translate to a large number of communication rounds. Our key method is to divide the weight of a vertex into two parts, a $\Vault$ from which it initiates requests for weight reductions with its neighbors, and a $\Bank$ from which it reduces weight in response to requests from its neighbors. Carefully balancing these two reciprocal weight reductions at each vertex gives the claimed $(2+\epsilon)$ approximation factor and $O(\log{\Delta}/\epsilon\log\log{\Delta})$ time complexity.

In fact, in our distributed algorithm, each vertex $v$ with degree $d(v)$ completes in $O(1/\epsilon)$ rounds if $d(v)\leq 16$, and in $O(\log{d(v)}/\epsilon\log\log{d(v)})$ rounds otherwise (and requires no knowledge of $n$ or $\Delta$). The algorithm also works in anonymous networks, i.e., no IDs are required. Moreover, the vertices are not required to start at the same round: as long as each vertex starts no later than after the first message has been sent to it, then each vertex completes within $O(\log{d(v)}/\epsilon\log\log{d(v)})$ rounds after it starts (or in $O(1/\epsilon)$ rounds if $d(v)\leq 16$).
Finally, provided that the weights of all vertices as well as the ratio between the maximal and minimal weights fit in $O(\log{n})$ bits, our algorithm can be modified to work in the CONGEST model.

For any constant $\epsilon$, our algorithm provides a constant approximation in $O(\log{\Delta}/\log\log{\Delta})$ rounds. Apart from improving upon the previous best known complexity for distributed $(2+\epsilon)$-approximation algorithm for minimum weight vertex cover and providing a new way of adapting the sequential local ratio technique to the distributed setting, our algorithm has the consequence of contradicting the lower bound of~\cite{KMW10}. The latter states that a constant approximation algorithm requires $\Omega(\log{\Delta})$ rounds. Its refutation implies that the current lower bound is $\Omega(\log{\Delta}/\log\log{\Delta})$ from~\cite{KMW04}, which means that our algorithm is tight.

In Section~\ref{sec:refutation} we pinpoint the flaw in the lower bound of~\cite{KMW10}. This also includes refuting the second result of~\cite{KMW10}, which is a lower bound in terms of $n$, of $\Omega(\sqrt{\log{n}})$ rounds for a constant approximation algorithm. Roughly speaking, we claim that the statement of the main theorem is only correct for some smaller range of parameters than claimed, and hence, in particular, one cannot apply it for a number of rounds that is $\Theta(\log{\Delta})$ or $\Theta(\sqrt{\log{n}})$.
We emphasize that, as far as we are aware, this bug \emph{does not} occur in the previous version of the lower bound~\cite{KMW04}, implying that the current lower bounds are $\Omega(\sqrt{\log{n}/\log\log{n}})$ in terms of $n$, and $\Omega(\log{\Delta}/\log\log{\Delta})$ in terms of $\Delta$.

\paragraph{Related Work:}
Minimum vertex cover is known to be one of Karp's 21 NP-hard problems~\cite{Karp72}. For the unweighted case, a simple polynomial-time $2$-approximation algorithm is obtained by taking the endpoints of a greedy maximal matching (see, e.g.,~\cite{Cormen2009,GareyJ79}). For the weighted case, the first polynomial-time $2$-approximation algorithm was given in~\cite{NemhauserT75} and observed by~\cite{Hochbaum82}. The first linear-time $2$-approximation algorithm is due to~\cite{BarYehudaE81} using the primal-dual framework, and~\cite{BarYehudaE1985} gives a linear-time $2$-approximation local-ratio algorithm. Conditioned on the unique games conjecture, minimum vertex cover does not have a $(2-\epsilon)$ polynomial-time approximation algorithm~\cite{KhotR08}.

In the distributed setting, an excellent summary of approximation algorithms is given in~\cite{AstrandS10}, which we overview in what follows. For the unweighted case, it is known how to find a $2$-approximation in $O(\log^4{n})$ rounds~\cite{HanckowiakKP01} and in $O(\Delta+\log^{*}{n})$ rounds~\cite{PanconesiR01}. With no dependence on $n$,~\cite{AstrandFPRSU09} give a $O(\Delta^2)$-round $2$-approximation algorithm, and~\cite{PolishchukS09} give an $O(\Delta)$-round $3$-approximation algorithm. The maximal matching algorithm of~\cite{BarenboimEPS12} gives a $2$-approximation for vertex cover in $O(\log{\Delta}+(\log\log{n})^4)$ rounds. This can be made into a $(2+1/\text{poly}{\Delta})$-approximation within $O(\log{\Delta})$ rounds~\cite{PettiePersonal}.

\sloppy{
For the weighted case,~\cite{GrandoniKP08,KoufogiannakisY09} give randomized $2$-approximation algorithms in $O(\log{n})$ rounds. In~\cite{PanconesiR01}, a $2$-approximation algorithm which requires $O(\Delta+\log^{*}{n})$ rounds, and in~\cite{KhullerVY94}, a $(2+\epsilon)$-approximation algorithm is given, requiring $O(\log{\epsilon^{-1}}\log{n})$ rounds. With no dependence on $n$,~\cite{KuhnMW06} give a $(2+\epsilon)$-approximation algorithm in $O(\epsilon^{-4}\log{\Delta})$,~\cite{AstrandFPRSU09} give a $2$-approximation algorithm in $O(1)$ rounds for $\Delta \leq 3$, and~\cite{AstrandS10} give a $2$-approximation algorithm in $O(\Delta+\log^{*}{W})$ rounds, where $W$ is the maximal weight.
}
\section{A local ratio template for approximating MWVC}
\label{sec:LR}
In this section we provide the template for using the local-ratio technique for obtaining a $(2+\epsilon)$-approximation for MWVC. This template does not assume any specific computation model and only describes the paradigm and correctness. It can be proven either using the primal-dual framework~\cite{BarYehudaE81}, or the local-ratio framework~\cite{BarYehuda00}, which are known to be equivalent~\cite{BarYehudaR05}. A similar idea, though in the primal-dual framework, was given in~\cite{KhullerVY94} which obtained a $(2+\epsilon)$-approxiamtion as well, but with a larger number of rounds. Our distributed implementation is more efficient and allows us to obtain a faster algorithm.
Here we provide the template and proof for completeness. In Section~\ref{sec:dist} we provide a distributed implementation of the template and analyze its running time.

We assume a given weighted graph $G=(V,w,E)$, where $w: V \rightarrow \mathbb{R}^{+}$ is an assignment of weights for the vertices.
Let $\delta: E \rightarrow \mathbb{R}^{+}$ be a function that assigns weights to edges. We say that $\delta$ is \emph{$G$-valid} if for every $v \in V$, $\sum_{e: v \in e}{\delta(e)} \leq w(v)$, i.e., the sum of weights of edges that touch a vertex is at most the weight of that vertex in $G$.

Fix any $G$-valid function $\delta$. Define $\tilde{w}_{\delta}: V \rightarrow \mathbb{R}^{+}$ by $\tilde{w}_{\delta}(v) = \sum_{e: v \in e}{\delta(e)}$, and let $w'_{\delta}: V \rightarrow \mathbb{R}^{+}$ be such that $w'_{\delta}(v)=w(v)-\tilde{w}_{\delta}(v)$. Since $\delta$ is $G$-valid, it holds that $w'_{\delta}(v) \geq 0$ for every $v \in V$.

Let $S_{\delta}=\{v \in V ~|~ w'_{\delta}(v) \leq \epsilon' w(v) \}$, where $\epsilon'= \epsilon/(2+\epsilon)$. The following theorem states that if $S_{\delta}$ is a vertex cover, then it is a $(2+\epsilon)$-approximation for MWVC.
\begin{theorem}
\label{thm:lr}
Fix $\epsilon>0$ and let $\delta$ be a $G$-valid function. Let $OPT$ be the sum of weights of vertices in a minimum weight vertex cover $S_{OPT}$ of $G$. Then $\sum_{v \in S_{\delta}}{w(v)} \leq (2+\epsilon)OPT$. In particular, if $S_{\delta}$ is a vertex cover then it is a $(2+\epsilon)$-approximation for MWVC for $G$.
\end{theorem}

\begin{proof}
For every $v \in V$ we have that $w'_{\delta}(v)=w(v)-\tilde{w}_{\delta}(v)$, which implies that $w(v)=w'_{\delta}(v)+\tilde{w}_{\delta}(v)$. For every $v \in S_{\delta}$ it holds that $w'_{\delta}(v) \leq \epsilon' w(v)$, and therefore $w(v)\leq \epsilon' w(v)+\tilde{w}_{\delta}(v)$. Put otherwise, for every $v \in S_{\delta}$ we have $w(v)\leq (1/(1-\epsilon'))\tilde{w}_{\delta}(v)$. This gives:
\begin{eqnarray*}
\sum_{v \in S_{\delta}}{w(v)} &\leq& \frac{1}{(1-\epsilon')}\sum_{v \in S_{\delta}}{\tilde{w}_{\delta}(v)}\\
&\leq& \frac{1}{(1-\epsilon')}\sum_{v \in S_{\delta}}{\sum_{e: v \in e}{\delta(e)}} \\
&\leq& \frac{1}{(1-\epsilon')}\sum_{v \in V}{\sum_{e: v \in e}{\delta(e)}}\\
&\leq& \frac{1}{(1-\epsilon')} \cdot 2 \sum_{e \in E}{\delta(e)}.
\end{eqnarray*}
The above is at most $(2/(1-\epsilon')) OPT$ because $OPT \geq \sum_{e \in E}{\delta(e)}$. To see why $OPT \geq \sum_{e \in E}{\delta(e)}$, associate each edge $e$ with its endpoint $v_e$ in $S_{OPT}$ (choose an arbitrary endpoint if both are in $S_{OPT}$). The weight $w(v)$ of each $v \in S_{OPT}$ is at least $\sum_{e: v_e=v}{\delta(e)}$, because it is at least $\sum_{e: v \in e}{\delta(e)}$. Hence, $OPT = \sum_{v \in S_{OPT}}{w(v)} \geq \sum_{v \in S_{OPT}}{\sum_{e: v_e=v}{\delta(e)}} = \sum_{e \in E}{\delta(e)}$.
Hence the sum of weights in $S_{\delta}$ is at most a factor $2/(1-\epsilon')$ larger than $OPT$. Since $\epsilon'= \epsilon/(2+\epsilon)$, we have that $2/(1-\epsilon') = (2-2\epsilon'+2\epsilon')/(1-\epsilon') = 2(1 + \epsilon'/(1-\epsilon')) = 2 + \epsilon$, which completes the proof.
\end{proof}

In the next section, we show how to implement efficiently in a distributed setting an algorithm that finds a function $\delta$ that is $G$-valid, for which the set $S_{\delta}$ is a vertex cover. This immediately gives a distributed $(2+\epsilon)$-approximation for MWVC.

\section{A fast distributed implementation}
\label{sec:dist}
Our goal in this section is to find a $G$-valid function $\delta$ such that $S_{\delta}$ is a vertex cover. Since every vertex knows whether it is in $S_{\delta}$, this immediately gives a distributed $(2+\epsilon)$-approximation algorithm for MWVC. Our algorithm is deterministic and requires for every vertex $v$ only $O(\log{d(v)}/\epsilon'\log\log{d(v)})$ rounds, where $d(v)$ is the degree of $v$ in $G$, or $O(1/\epsilon')$ if $d(v)\leq 16$. Here $\epsilon'= \epsilon/(2+\epsilon)$ where $\epsilon=O(1)$, which means that $\epsilon'=\Theta(\epsilon)$.
For clarity of presentation, in this section we describe an implementation for the LOCAL model. In Section~\ref{sec:discussion} we show how this can be easily be adapted to the CONGEST model in which the message size is limited to $O(\log{n})$ bits, provided that the initial weights of the vertices and the ratio between the maximal and minimal weights can be expressed by $O(\log{n})$ bits.

In our algorithm, each vertex converges to agreeing with each of its neighbors on a function $\delta$ that is $G$-valid, by iterating the process of decreasing the weight of neighbors by the same amount, until either its weight is below a small fraction of its original weight or it has no more neighbors in the graph induced by the vertices that remain so far. This would imply that the set of vertices whose weight decreased below the above threshold is a vertex cover, and by Theorem~\ref{thm:lr} its weight is a $(2+\epsilon)$-approximation to the weight of a minimal vertex cover.

\paragraph{Overview of Algorithm~\ref{alg}:} The algorithm consists of iterations, each of which has a constant number of communication rounds. Each vertex $v$ splits its current weight $w_i(v)$ into two amounts. The first amount is $\Vault(v)$, which is equal to its threshold $\epsilon' w_0(v)$, and the second amount is $\Bank(v)$, which contains the rest of the weight $w_i(v) - \Vault(v)$. Notice that $\epsilon'<1$ because $\epsilon'= \epsilon/(2+\epsilon)$ and therefore these amounts are well-defined.

In each iteration, vertex $v$ sends a $\Deal_i(v,u)$ request to its neighbor $u$, which is the amount in its $\Vault(v)$ divided by the current number of neighbors of $v$. This guarantees that any weight decrease that results for $v$ from this part does not exceed its total remaining weight. The second amount is used to respond to $\Deal_i(u,v)$ requests from its neighbors. The vertex $v$ processes these requests one by one in any arbitrary order, and responds with the amount $\Budget_i(v,u)$ which is the largest amount by which $v$ can currently decrease its weight, and no more than the request $\Deal_i(u,v)$. This amount is decreased from $\Bank(v)$, and hence it is also guaranteed that decreasing this amount does not exceed the total remaining weight.

Once the weight of $v$ reaches its threshold, $v$ completes its algorithm, returning $\InCover$ after notifying its neighbors of this fact. If the weight is still above the threshold then $v$ only removes its edges to neighbors that notified they are returning $\InCover$. This gives that each edge has at least one endpoint returning $\InCover$, and hence the set of all such vertices is a vertex cover, and by Theorem~\ref{thm:lr} its weight is a $(2+\epsilon)$-approximation to the weight of a minimal vertex cover. The analysis of the number of rounds is based on the observation that for each of the neighbors $u$ of $v$, either it decreases its weight by responding to $\Deal_i(v,u)$ with the entire amount and thereby contributes to decreasing the weight of $v$ by this amount, or it does not have the required budget in its $\Bank(u)$, in which case it contributes to decreasing the number of neighbors of $v$ by $1$.

We proceed by the full pseudocode, followed by an explicit analysis.

\RestyleAlgo{boxruled}
\LinesNumbered
\begin{algorithm}[htbp]
	\label{alg}
	\caption{A distributed $(2+\epsilon)$-approximation algorithm for MWVC, code for vertex $v$.}
	$w_0(v) = w(v)$\\
	$d_0(v) = d(v)$\\
	$N_0(v) = N(v)$\\
	$i=0$\\
	$\Vault(v) = \epsilon' w_0(v)$\\
	\While{$true$}
	{
		$\Bank(v) = w_i(v) - \Vault(v)$\\
		$w_{i+1}(v) = w_i(v) $\\
		$N_{i+1}(v) = N_i(v)$\\
		\ForEach{$u \in N_i(v)$}
		{		
			$\Deal_i(v,u) = \Vault(v)/d_i(v)$\\
			Send $\Deal_i(v,u)$ to $u$\\
			Let $\Budget_i(u,v)$ be the response from $u$\\
			$w_{i+1}(v) = w_{i+1}(v) - \Budget_i(u,v)$\\
			\If{$\Budget_i(u,v) < \Deal_i(v,u)$}
			{
				$N_{i+1}(v) = N_{i+1}(v)\setminus\{u\}$
			}
		}
		Let $u_1\dots u_{d_i(v)}$ be an order of $N_i(v)$\\
		\ForEach{$k=1,\dots,d_i(v)$}
		{
			Let $\Deal_i(u_k,v)$ be received from $u_k \in N_i(v)$\\
			$\Budget_i(v,u_k) = \min\{\Deal_i(u_k,v), \Bank(v)-\sum_{t=1}^{k-1}{\Budget_i(v,u_t)} \}$\\
			Send $\Budget_i(v,u_k)$ to $u_k$
		}
		$\Bank(v) = \Bank(v) - \sum_{k=1}^{d_i(v)}{\Budget_i(v,u_k)}$\\
		$w_{i+1}(v) = w_{i+1}(v) - \sum_{k=1}^{d_i(v)}{\Budget_i(v,u_k)}$\\
		$d_{i+1}(v) = |N_{i+1}(v)|$\\
		$i=i+1$\\
		\If{$w_i(v) \leq \epsilon' w_0(v)$}
		{
            		Send $(v,cover)$ to all neighbors\\
            		Return $\InCover$\\
        		}
        		\ForEach{$(u,cover)$ received from $u \in N_i(v)$}
        		{
            		$N_i(v) = N_{i}(v)\setminus\{u\}$\\
            		$d_i(v) = d_i(v) - 1$
        		}
        		\If{$d_i(v)=0$}
        		{
            		Return $\NotInCover$
        		}
    	}	
\end{algorithm}

\begin{theorem}
\label{thm:alg}
For every $\epsilon=O(1)$, Algorithm~\ref{alg} is a deterministic distributed $(2+\epsilon)$-approximation algorithm for MWVC in which each vertex $v$ with degree $d(v)$ completes in $O(1/\epsilon)$ rounds if $d(v)\leq 16$, and in $O(\log{d(v)}/\epsilon\log\log{d(v)})$ rounds otherwise.
\end{theorem}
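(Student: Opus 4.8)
The plan is to split the statement into a correctness part, handled through Theorem~\ref{thm:lr}, and a running-time part, handled by a direct per-vertex analysis of the iteration count. I would first extract from a run of Algorithm~\ref{alg} a candidate $G$-valid function $\delta$: for each edge $e=\{u,v\}$ set $\delta(e)=\sum_i(\Budget_i(u,v)+\Budget_i(v,u))$, the sum taken over all iterations $i$ in which both endpoints are still active. The key structural observation is the \emph{local-ratio symmetry}: in any iteration $i$ in which the edge $\{u,v\}$ is processed, the weight of $v$ decreases by exactly $\Budget_i(u,v)$ (what $u$ returns to $v$'s request) plus $\Budget_i(v,u)$ (what $v$ returns to $u$'s request), and the weight of $u$ decreases by the same total. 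Summing over all iterations then gives $\tilde{w}_\delta(v)=\sum_{e\ni v}\delta(e)=w_0(v)-w_{\mathrm{final}}(v)$, so that $w'_\delta(v)=w_{\mathrm{final}}(v)$.

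For correctness I would verify the two hypotheses of Theorem~\ref{thm:lr}. \emph{$G$-validity} reduces to non-negativity of the weights: in iteration $i$ the vault-based decrease of $v$ is $\sum_u\Budget_i(u,v)\le\sum_u\Deal_i(v,u)=\Vault(v)$ and the bank-based decrease is at most $\Bank(v)$ by the $\min$ in the budget rule, so the total decrease is at most $\Vault(v)+\Bank(v)=w_i(v)$ and $w_{i+1}(v)\ge 0$; hence $\tilde w_\delta(v)\le w(v)$. \emph{$S_\delta$ is a vertex cover}: since $v\in S_\delta$ iff $w_{\mathrm{final}}(v)\le\epsilon' w_0(v)$, the set $S_\delta$ is exactly the set of vertices that return $\InCover$. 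It then suffices to show that whenever $v$ removes a neighbor $u$ the vertex $u$ returns $\InCover$. A removal happens either because $u$ announced cover (so $u\in S_\delta$ by definition), or because $\Budget_i(u,v)<\Deal_i(v,u)$; in the latter case the $\min$ in $u$'s budget rule was attained by its remaining bank, so $u$ exhausts all of $\Bank(u)$ in iteration $i$ and ends it with weight $\le\Vault(u)=\epsilon' w_0(u)$, i.e.\ $u$ returns $\InCover$. Consequently every vertex returning $\NotInCover$ has all its neighbors in $S_\delta$, so no edge can have both endpoints outside $S_\delta$. Making this rigorous needs one invariant, that at the start of each iteration the neighbor lists of two still-active vertices are symmetric, which I would prove by induction using exactly the fact that any removal of $u$ from $v$'s list coincides with $u$ becoming $\InCover$ and hence inactive.

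For the running time I would argue per vertex $v$, charging each iteration either to a weight decrease or to a degree decrease. Classify the neighbors in $N_i(v)$: the $a_i$ that answer $v$'s request in full contribute at least $a_i\,\Vault(v)/d_i(v)$ to the vault-based decrease of $w(v)$, while each of the remaining $d_i(v)-a_i$ answers partially and is dropped from $N_{i+1}(v)$, so that $d_{i+1}(v)\le a_i$. Since the weight removed from $v$ over the whole run is at most $w_0(v)$, summing the vault decreases over all iterations $v$ runs gives $\sum_i a_i/d_i(v)\le w_0(v)/\Vault(v)=1/\epsilon'$, and with $d_{i+1}(v)\le a_i$ also $\sum_{i=0}^{T-1}d_{i+1}(v)/d_i(v)\le 1/\epsilon'$, where $T$ is the number of iterations. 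Applying the AM--GM inequality to the telescoping product $\prod_i d_{i+1}(v)/d_i(v)=d_T(v)/d(v)$ and using $d_T(v)\ge 1$ while $v$ is still running yields
\begin{equation*}
(T\epsilon')^{T}\le d(v).
\end{equation*}

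The main obstacle I anticipate is extracting the claimed bound from this inequality and separating the two regimes. For $d(v)\le 16$ the inequality is vacuous until $T\epsilon'$ exceeds a constant, which caps $T$ at $O(1/\epsilon')$; for $d(v)>16$, taking logarithms gives $T\log(T\epsilon')\le\log d(v)$, and substituting $T=\Theta\!\left(\log d(v)/(\epsilon'\log\log d(v))\right)$ makes the left-hand side exceed $\log d(v)$, forcing $v$ to terminate before this many iterations. Since each iteration of Algorithm~\ref{alg} uses a constant number of rounds and $\epsilon'=\Theta(\epsilon)$, this yields the stated $O(1/\epsilon)$ and $O(\log d(v)/\epsilon\log\log d(v))$ bounds, and together with the correctness argument and Theorem~\ref{thm:lr} gives the $(2+\epsilon)$-approximation. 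The points I expect to require the most care are this transcendental estimate in the large-degree regime, pinning down exactly where the threshold $16$ enters, and the synchronization details underpinning the neighbor-list symmetry invariant.
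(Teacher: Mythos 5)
Your proposal is correct, and its two halves relate differently to the paper. The correctness half is essentially the paper's own proof: the same edge function $\delta_i(e)=\Budget_i(u,v)+\Budget_i(v,u)$, the same $G$-validity argument via $\Vault(v)+\Bank(v)=w_i(v)$, and the same identification of $S_\delta$ with the set of vertices returning \InCover. In fact your cover argument is more explicit than the paper's, which merely asserts that an edge between two non-cover vertices never disappears, leaving implicit exactly the fact you prove: a short response $\Budget_i(u,v)<\Deal_i(v,u)$ means $u$ emptied its bank, hence $w_{i+1}(u)\le\Vault(u)=\epsilon' w_0(u)$ and $u$ returns \InCover{} that same iteration (the neighbor-list symmetry invariant you flag is likewise needed, and routine, in the paper's version). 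The complexity half, however, takes a genuinely different route. The paper fixes a parameter $K_v>1$, shows each iteration either cuts the degree by a factor $K_v$ or cuts the weight by $\epsilon' w_0(v)/K_v$, bounds the two iteration counts by $\log{d(v)}/\log{K_v}$ and $K_v/\epsilon'$ respectively, and then optimizes $K_v$ (taking $d(v)+1$ when $d(v)\le 16$ and $\log{d(v)}/\log\log{d(v)}$ otherwise). You instead amortize over the whole run: $d_{i+1}(v)\le a_i$ together with the total-weight budget gives $\sum_i d_{i+1}(v)/d_i(v)\le 1/\epsilon'$, and AM--GM on the telescoping product collapses everything into the single master inequality $(T\epsilon')^T\le d(v)$. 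The paper's dichotomy buys a mechanical finish---once $K_v$ is chosen, the bound is the sum of two explicit terms---at the cost of pulling the right $K_v$ out of a hat; your argument eliminates the parameter and yields both regimes from one inequality, at the cost of inverting a transcendental bound, which you rightly flag as the delicate step. That inversion does go through: when $T\epsilon'<1$ the inequality is vacuous, consistent with the $O(1/\epsilon')$ cap; when $d(v)\le 16$ it forces $T=O(1/\epsilon')$; and when $d(v)>16$, substituting $T=C\log{d(v)}/(\epsilon'\log\log{d(v)})$ requires $\log\log{d(v)}-\log\log\log{d(v)}=\Omega(\log\log{d(v)})$, which holds for all $d(v)>16$ since $\log_2{x}\le 0.54\,x$ for $x\ge 2$---the same estimate the paper silently needs in the last display of its own complexity lemma. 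Both routes deliver identical bounds, so the proposal stands as a valid alternative proof.
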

We split the proof into two parts, showing correctness and complexity separately. We begin by showing correctness in the following lemma. Essentially, we show that the algorithm finds a function $\delta$ that is $G$-valid and for which $S_{\delta}$ is a vertex cover.

\begin{lemma}
\label{lemma:approx}
Algorithm~\ref{alg} is a deterministic distributed $(2+\epsilon)$-approximation algorithm for MWVC.
\end{lemma}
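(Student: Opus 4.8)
The plan is to reduce Lemma~\ref{lemma:approx} to Theorem~\ref{thm:lr} by exhibiting the function $\delta$ that the algorithm implicitly constructs, and then verifying the two hypotheses of that theorem: that $\delta$ is $G$-valid, and that $S_\delta$ is a vertex cover. First I would define $\delta$ explicitly from the transcript of the execution. Every unit of weight that leaves a vertex does so either as a $\Budget_i(v,u)$ paid out of $\Bank(v)$ in response to a request, or by being matched against such a budget when $v$ itself requested. The natural definition is to let the weight reduced on an edge $e=\{u,v\}$ in iteration $i$ be $\Budget_i(u,v)+\Budget_i(v,u)$ (the amount $u$ pays toward $v$'s request plus the amount $v$ pays toward $u$'s request), and then set $\delta(e)=\sum_i \bigl(\Budget_i(u,v)+\Budget_i(v,u)\bigr)$, the total over all iterations. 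The key bookkeeping identity I would establish is that for each vertex $v$, the total decrease $w_0(v)-w_\infty(v)$ in its weight equals $\sum_{e:\,v\in e}\delta(e)$, i.e.\ $\tilde w_\delta(v)$; this follows by summing lines~14 and~23 of the algorithm over all iterations and noting that each reduction of $w_{i+1}(v)$ is charged to exactly one incident edge.

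Next I would verify $G$-validity, namely $\tilde w_\delta(v)\le w(v)$ for every $v$. This is exactly the invariant that the weight never drops below zero, which I would prove by induction on $i$, showing $w_i(v)\ge 0$ is maintained. The two places weight is removed must each be controlled. For the bank payments (lines~20--23), the explicit $\min$ with $\Bank(v)-\sum_{t<k}\Budget_i(v,u_t)$ in line~20 guarantees the cumulative budget paid out never exceeds $\Bank(v)=w_i(v)-\Vault(v)$, so after the responding phase the weight is still at least $\Vault(v)=\epsilon'w_0(v)\ge 0$. For the request payments (line~14), $v$ receives at most $\Deal_i(u,v)=\Vault(u)/d_i(u)$ back from each of its $d_i(v)$ neighbors, but what matters for $v$'s own nonnegativity is that these are subtracted only from $v$'s side; the design ensures that the amount $v$ itself causes to be subtracted via its outgoing requests, namely $\Vault(v)/d_i(v)$ summed over neighbors who honor it, is at most $\Vault(v)$, since there are at most $d_i(v)$ neighbors. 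Combining, each iteration subtracts at most $\Vault(v)$ from the vault-side and stays within $\Bank(v)$ on the bank-side, so $w_i(v)\ge 0$ throughout, giving $\tilde w_\delta(v)=w_0(v)-w_\infty(v)\le w(v)$ as required.

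Finally I would show $S_\delta$ is a vertex cover, where $S_\delta=\{v : w'_\delta(v)\le\epsilon'w(v)\}$ and $w'_\delta(v)=w(v)-\tilde w_\delta(v)=w_\infty(v)$. A vertex returns $\InCover$ precisely when line~26 fires, i.e.\ when $w_i(v)\le\epsilon'w_0(v)$, which is exactly the membership condition for $S_\delta$; so $\InCover$ vertices are contained in $S_\delta$. It then suffices to argue that every edge has at least one $\InCover$ endpoint. An edge $\{u,v\}$ is removed from $v$'s neighbor list only in two situations: either $\Budget_i(u,v)<\Deal_i(v,u)$ (line~15), meaning $u$ could not fully honor $v$'s request because $u$'s bank was exhausted --- which by the previous paragraph forces $w_{i+1}(u)\le\Vault(u)=\epsilon'w_0(u)$, so $u$ becomes $\InCover$; or $u$ sent a $(u,cover)$ message (line~29), meaning $u$ is already $\InCover$. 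Hence an edge is never deleted from either endpoint's list unless the opposite endpoint is $\InCover$, and a vertex only returns $\NotInCover$ once its list is empty (line~32). Therefore at termination every edge is covered, $S_\delta$ is a vertex cover, and Theorem~\ref{thm:lr} yields the $(2+\epsilon)$-approximation.

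The main obstacle I anticipate is the careful accounting in the $G$-validity step, specifically disentangling the two reciprocal reductions so that neither the vault-initiated requests nor the bank-initiated responses can push $w_i(v)$ negative; the subtlety is that $v$'s weight is simultaneously reduced by its own outgoing requests and by its incoming responses within the same iteration, and one must check the line~20 $\min$ and the $\Vault/d_i$ split together rather than in isolation. Establishing the invariant $w_i(v)\ge\epsilon'w_0(v)$ whenever $v$ has not yet returned $\InCover$ is the cleanest way to package this, and it simultaneously certifies both $G$-validity and the implication ``exhausted bank $\Rightarrow$ $\InCover$'' used in the covering argument.
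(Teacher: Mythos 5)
Your proposal is correct and follows essentially the same route as the paper's own proof: the same definition $\delta(e)=\sum_i\bigl(\Budget_i(u,v)+\Budget_i(v,u)\bigr)$, the same $G$-validity argument bounding the bank-side payments by $\Bank(v)$ (via the $\min$) and the vault-side reductions by $\Vault(v)$ (via the $\Vault(v)/d_i(v)$ split), and the same reduction to Theorem~\ref{thm:lr}. Your covering argument merely makes explicit a step the paper leaves implicit --- that a budget shortfall at $u$ exhausts $\Bank(u)$ and hence forces $w_{i+1}(u)\le\epsilon' w_0(u)$, so edge removals only ever point at $\InCover$ endpoints --- which is a welcome clarification but not a different proof.
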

\begin{proof}
We first show that Algorithm~\ref{alg} is a $(2+\epsilon)$-approximation algorithm for MWVC. That is, we claim that the set $C=\{v \in V~|~ v \mbox{ outputs } \InCover\}$ is a vertex cover, and that $\sum_{v \in C}{w(v)} \leq (2+\epsilon)OPT$, where $OPT = \sum_{v \in S_{OPT}}{w(v)}$ for some optimal vertex cover $S_{OPT}$. For this, we show that the sum of amounts deducted by neighbors can be used to define a $G$-valid function over the edges. This will be exactly the function according to which the vertices decide whether to output $\InCover$ or $\NotInCover$.

For every $e = \{v,u\} \in E$ and every $i = 0,1 \dots$, let $\delta_i(e)=\Budget_i(u,v)+\Budget_i(v,u)$. Let $\delta(e)=\sum_{i=0,1,\dots}{\delta_i(e)}$. We claim that $\delta$ is $G$-valid, i.e., for every vertex $v$ it holds that $\sum_{e: v \in e}{\delta(e)} \leq w(v)$. Let $j$ be the value of $i$ when $v$ returns, that is, $v$ participates in iterations $i=0,\dots,j-1$. For each iteration $i=0,\dots j-1$ it holds that
$$\sum_{u_k \in N_i(v)}{\Budget_i(u_k,v)} \leq \sum_{u_k \in N_i(v)}{\Vault(v)/d_i(v)} = \Vault(v),$$
where $N_i(v)=\{u_1,\dots,u_{d_i(v)}\}$ is the set of neighbors of $v$ at the beginning of iteration $i$.
Further, since for $u_k \in N_i(v)$, $\Budget_i(v,u_k) = \min\{\Deal_i(u_k,v),\Bank(v)-\sum_{t=1}^{k-1}{\Budget_i(v,u_t)} \}$, we have that
$$\sum_{u_k \in N_i(v)}{\Budget_i(v,u_k)} \leq \Bank(v).$$

Since $\Bank(v) = w_i(v) - \Vault(v)$ it holds that $\sum_{e=\{v,u_k\}: u_k \in N_i(v)}{\delta_i(e)} \leq w_i(v)$. Since $w_{i+1}(v)= w_i(v)-\sum_{e: v \in e}{(\Budget_i(u,v)+\Budget_i(v,u))}$, we have that $w_{i+1}(v) = w_i(v)-\sum_{e: v \in e}{\delta_i(e)} \geq 0$.
This gives that $w(v) = \sum_{i=0}^{j-1}{(w_i(v)-w_{i+1}(v))}+w_j(v) = \sum_{i=0}^{j-1}{\sum_{e: v \in e}{\delta_i(e)}}+w_j(v) \geq 0$, and hence $w(v)-\sum_{e: v \in e}{\delta(e)} = w_j(v) \geq 0$.

This proves that $\delta$ is $G$-valid, which gives that for  $C=\{v \in V~|~ v \mbox{ outputs } \InCover\}$ it holds that $\sum_{v \in C}{w(v)} \leq (2+\epsilon)OPT$, where $OPT = \sum_{v \in S_{OPT}}{w(v)}$ for some optimal vertex cover $S_{OPT}$, by Theorem~\ref{thm:lr}. This is because a vertex $v$ outputs $\InCover$ at the end of iteration $i=j-1$ if and only if $w_{j}(v) \leq \epsilon' w_0(v)$. It remains to show that $C$ is a vertex cover. To see why, consider an edge $e=\{v,u\} \in E$. We claim that if $u,v$ have both returned by the end of iteration $i$, then at least one of them is in $C$. This is because otherwise $d_{i+1}(v),d_{i+1}(u) \geq 1$, which implies that both have not returned yet. This completes the proof that $C$ is indeed a $(2+\epsilon)$-approximation for MWVC.
\end{proof}

It remains to bound the number of rounds. We do so in the following lemma, in which we show that in each iteration either enough weight is reduced or enough neighbors enter the vertex cover.
\begin{lemma}
\label{lemma:complexity}
In Algorithm~\ref{alg}, each vertex $v$ with degree $d(v)$ completes in $O(1/\epsilon)$ rounds if $d(v)\leq 16$, and in $O(\log{d(v)}/\epsilon\log\log{d(v)})$ rounds otherwise.
\end{lemma}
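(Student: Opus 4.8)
The plan is to track, for each vertex $v$, a potential that captures "progress per iteration" and show it shrinks geometrically-fast enough to hit the termination condition in $O(\log d(v)/\epsilon'\log\log d(v))$ iterations. The key quantity suggested by the overview is the current degree $d_i(v)$: I want to show that in each iteration, either $v$ loses a constant fraction of its \emph{remaining vault-normalized weight} through the bank-responses of its neighbors, or its degree drops by a large multiplicative factor. The dichotomy is exactly the one flagged in the algorithm overview: for each neighbor $u$ of $v$, either $u$ responds to $\Deal_i(v,u)$ with the full requested amount $\Vault(v)/d_i(v)$ (contributing that much weight reduction to $v$), or $u$ responds with strictly less, in which case $u$ is dropped from $N_{i+1}(v)$ and the degree decreases by one.

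\medskip

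First I would set up the amortized argument precisely. Fix $v$ and an iteration $i$. Partition $N_i(v)$ into the set $A$ of neighbors answering $v$'s request in full and the set $B = N_i(v)\setminus A$ answering with less. The weight reduction $v$ gets from its own requests is at least $|A|\cdot \Vault(v)/d_i(v)$, and every vertex in $B$ is removed, so $d_{i+1}(v) \le |A|$. Now I would split on the size of $|A|$ relative to a threshold like $\sqrt{d_i(v)}$ (or more precisely $d_i(v)/\log d_i(v)$, which is what produces the $\log\log$ in the denominator). If $|A|$ is small, the degree collapses dramatically: $d_{i+1}(v) \le \sqrt{d_i(v)}$, so after $O(\log\log d(v))$ such "small" iterations the degree is a constant. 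If $|A|$ is large, then $v$'s weight drops by at least $|A|\cdot\Vault(v)/d_i(v) \ge \Vault(v)/\log d_i(v)$ (using $\Vault(v) = \epsilon' w_0(v)$, a \emph{fixed} amount independent of $i$), so after $O(\log d_i(v))$ such iterations the total weight removed exceeds $w_0(v) - \Vault(v)$, forcing $w_i(v) \le \Vault(v) = \epsilon' w_0(v)$ and hence termination via the $\InCover$ branch.

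\medskip

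The bookkeeping to get the stated bound is then to count the two iteration types separately. Let me write $d = d(v)$. The number of "degree-collapse" iterations is at most $O(\log\log d)$ because repeated square-rooting reaches a constant in that many steps (once $d_i(v)\le 16$ the $O(1/\epsilon')$ regime takes over). The number of "weight-drop" iterations is the subtle part: each such iteration removes at least $\epsilon' w_0(v)/\log d_i(v)$, and since the current degree $d_i(v)$ is nonincreasing and at most $d$, each drop is at least $\epsilon' w_0(v)/\log d$; as the total removable weight is at most $w_0(v)(1-\epsilon')$, there are at most $(1-\epsilon')\log d/\epsilon' = O(\log d/\epsilon')$ weight-drop iterations. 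Combining, the total iteration count is $O(\log\log d) + O(\log d/\epsilon')$, which is dominated by $O(\log d/\epsilon')$ rather than the claimed $O(\log d/\epsilon'\log\log d)$.

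\medskip

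\textbf{The main obstacle} will be tightening this to the stated $\log\log d$ \emph{denominator} rather than the loose $O(\log d/\epsilon')$ I just obtained. The naive split at $\sqrt{d}$ is too crude: it charges each weight-drop iteration only $\epsilon' w_0(v)/\log d$, which loses a full logarithmic factor. To recover the $\log\log d$ saving I expect the correct threshold to be $|A|\ge d_i(v)/\log d_i(v)$ versus $|A| < d_i(v)/\log d_i(v)$, and then to observe that in the degree-collapse case the degree drops by a factor of $\log d_i(v)$, not merely to a square root, so the degree reaches a constant in only $O(\log d/\log\log d)$ collapse iterations; meanwhile each weight-drop iteration removes at least a $1/\log d_i(v)$ fraction of the vault, but crucially the number of such iterations interleaves with the shrinking degree. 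The delicate accounting is to sum $\sum \log d_i(v)$-type contributions against a degree sequence that itself shrinks by $\log$-factors, and to verify that the harmonic-like telescoping yields $O(\log d/\epsilon'\log\log d)$ total rounds. I would handle this by bounding the two phases by a common geometric/telescoping series indexed by $\log^{(k)} d$, and would check the constant-degree base case ($d_i(v)\le 16 \Rightarrow O(1/\epsilon')$) closes the recursion cleanly.
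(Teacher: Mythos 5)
Your setup is exactly the paper's: in each iteration, every neighbor $u \in N_i(v)$ either answers $\Deal_i(v,u)$ in full (contributing $\Vault(v)/d_i(v)$ to the decrease of $w_i(v)$) or is removed from $N_{i+1}(v)$, and the proof is a matter of balancing these two kinds of progress. However, the argument you actually complete yields only $O(\log d(v)/\epsilon')$ iterations --- as you yourself concede --- and the repair you sketch does not close the gap. The concrete flaw in the repair: with an iteration-dependent threshold $|A| \ge d_i(v)/\log d_i(v)$, a ``weight-drop'' iteration guarantees a drop of only $\Vault(v)/\log d_i(v)$, and such an iteration need not decrease the degree at all (all neighbors may answer in full, in which case none is removed). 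So the degree can remain near $d(v)$ throughout every weight-drop iteration, your guaranteed drop per such iteration never improves beyond $\Vault(v)/\log d(v)$, and the telescoping over $\log^{(k)} d$ you propose simply has nothing to act on for these iterations: their count stays $\Theta(\log d(v)/\epsilon')$ under your accounting. (A linked analysis could be rescued by using that the drop equals $|A|\cdot\Vault(v)/d_i(v)$ while simultaneously $d_{i+1}(v)\le |A|$, and then optimizing over the sequence of ratios $|A|/d_i(v)$ subject to a sum constraint from the weight and a product constraint from the degree; but that is a genuinely different and more involved argument than the one you describe.)

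The missing idea --- and the paper's one-line fix --- is to make the threshold \emph{fixed per vertex} rather than varying with $d_i(v)$: choose $K_v = \log d(v)/\log\log d(v)$ when $d(v)>16$ (and $K_v = d(v)+1$ otherwise), and observe that every iteration either (a) satisfies $d_{i+1}(v) \le d_i(v)/K_v$, or (b) has at least $\lceil d_i(v)/K_v \rceil$ full responses and hence $w_{i+1}(v) \le w_i(v) - \epsilon' w_0(v)/K_v$, a lower bound on progress that is independent of the current degree. Type-(a) iterations number at most $\log d(v)/\log K_v = O(\log d(v)/\log\log d(v))$, and type-(b) iterations number at most $K_v/\epsilon' = \log d(v)/(\epsilon' \log\log d(v))$. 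The realization you were missing is that the degree-collapse phase does \emph{not} need to finish in $O(\log\log d(v))$ iterations (as square-rooting would give); it only needs to finish in $O(\log d(v)/\log\log d(v))$ iterations, so both counts can be balanced at that common value by a single fixed parameter, with no interleaving or telescoping argument needed at all.
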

\begin{proof}
Let $K_v>1$ be a parameter to be chosen later. Let $i$ be an iteration at the beginning of which a vertex $v \in V$ has not yet returned. We claim that either
$d_{i+1}(v) \leq d_i(v)/K_v$ or $w_{i+1}(v) \leq w_i(v)-\epsilon'w_0(v)/K_v$. To see why, suppose $d_{i+1}(v) > d_i(v)/K_v$. This means that for at least $\ceil{d_i(v)/K_v}$ vertices $u \in N_i(v)$, it holds that $\Budget_i(u,v) = \Deal_i(v,u)$, and hence
\begin{eqnarray*}
w_{i+1}(v) &\leq& w_i(v) - \left\lceil\frac{d_i(v)}{K_v}\right\rceil\cdot\frac{\Vault(v)}{d_i(v)}\\
&\leq& w_i(v) - \frac{d_i(v)}{K_v}\cdot\frac{\epsilon' w_0(v)}{d_i(v)}\\
&\leq& w_i(v) - \epsilon' w_0(v)/K_v.
\end{eqnarray*}

Next, we claim that $v$ returns after at most $K_v/\epsilon' + \log{d(v)}/\log{K_v}$ iterations of the algorithm. This is because at most $\log_{K_v}{d(v)}=\log{d(v)}/\log{K_v}$ of the iterations $i$ can be such that $d_{i+1}(v) \leq d_i(v)/K_v$ (since $v$ returns when $d_i(v)=0$), and at most $K_v/\epsilon'$ iterations $i$ can be such that $w_{i+1}(v) \leq w_i(v)-\epsilon'w_0(v)/K_v$ (since $v$ returns when $w_i(v) \leq \epsilon'w_0(v)$).

Finally, we set $K_v$ as follows. If $d(v)\leq 16$ we set $K_v=d(v)+1$. This guarantees $K_v>1$ (an isolated vertex simply outputs \NotInCover) and gives $O(1/\epsilon)$ rounds for $v$ to complete.

Otherwise, we set $K_v=\log{d(v)}/\log{\log{d(v)}}$. Since $d(v)>16$, it holds that $K_v$ is well defined (as $\log\log{d(v)}>1$) and that $K_v>1$. It also holds that $\log{K_v}>1$ which is used in what follows.
This gives that vertex $v$ returns after at most $j$ iterations, where
\begin{eqnarray*}
j &\leq&  K_v/\epsilon' + \log{d(v)}/\log{K_v}\\
&=& \frac{\log{d(v)}}{\epsilon'\log{\log{d(v)}}} +\log{d(v)}/\log{K_v}\\
&=& \frac{\log{d(v)}}{\epsilon'\log{\log{d(v)}}} + \frac{\log{d(v)}}{\log{(\log{d(v)}/\log{\log{d(v)}})}}\\
&=& \frac{\log{d(v)}}{\epsilon'\log{\log{d(v)}}} + \frac{\log{d(v)}}{\log\log{d(v)}-\log\log{\log{d(v)}}}\\
&\leq& O\left(\frac{\log{d(v)}}{\epsilon\log\log{d(v)}}\right),
\end{eqnarray*}
where the last inequality follows because $\epsilon'= \epsilon/(2+\epsilon)$ (and since $\epsilon$ is at most $O(1)$ and so $\epsilon'=\Theta(\epsilon)$) and $\log\log{d(v)}$ dominates $\log\log\log{d(v)}$, completing the proof.
\end{proof}

Theorem~\ref{thm:alg} follows directly from Lemmas~\ref{lemma:approx} and ~\ref{lemma:complexity}.

\section{Adaptation to the CONGEST model}
\label{sec:discussion}
Our algorithm is described for the LOCAL model, but can be easily adapted to the CONGEST model in which the message size is limited to $O(\log{n})$ bits, provided that the initial weights of the vertices and the ratio between the maximal and minimal weights can be expressed by $O(\log{n})$ bits. In order to accommodate $O(\log{n})$-bit messages, we slightly modify the messages that are sent as follows. First, in an initial round, each vertex $v$ sends $w_0(v)$ to all of its neighbors. Then, instead of sending $\Deal_i(v,u)$ to neighbor $u$ in some iteration $i$, vertex $v$ only needs to send $d_i(v)$ to its neighbor $u$ and $u$ can locally compute $\Deal_i(v,u)=\Vault(v)/d_i(v)$ since all vertices know the value of $\epsilon$ as part of their algorithm.

Second, we need to handle the messages of type $\Budget_i(v,u)$. In general, this amount can be an arbitrary fraction which might not fit in $O(\log{n})$ bits. However, we notice that we can avoid sending this explicit amount. To do this, we slightly modify $\Vault(v)$ to be $\epsilon'w_0(v)/2$. Then, upon receiving a $\Deal_i(u,v)$ message, if $\Budget_i(v,u)=\Deal_i(u,v)$ then vertex $v$ replies with a predefined message \emph{accept}, and otherwise, $v$ responds with the maximal integer $t$ such that $t\epsilon'w_0(v)/2 \leq \Budget_i(v,u)$. The amount $t\epsilon'w_0(v)/2$ can be locally computed by $u$, and $u$ can infer that $v$ returns $\InCover$. This is because the remainder of weight in vertex $v$ will be another value of at most $\epsilon'w_0(v)/2$ on top of the at most $\epsilon'w_0(v)/2$ value which might remain in $\Vault(v)$, summing to no more than $\epsilon'w_0(v)$, as needed.

\section{Discussion of~\cite{KMW10}}
\label{sec:refutation}
The main result of~\cite{KMW10} is Theorem $9$, which states the following:

\paragraph{Theorem 9 from~\cite{KMW10}.}
\textit{
For every constant $\epsilon>0$, there are graphs $G$, such that in $k$ communication rounds, every distributed algorithm for the minimum vertex cover problem on $G$ has approximation ratios at least $$\Omega\left(n^{\frac{1/4-\epsilon}{k^2}}\right) \mbox{ and } \Omega\left(\Delta^{\frac{1-\epsilon}{k+1}}\right),$$
where $n$ and $\Delta$ denote the number of nodes and the highest degree in $G$, respectively.
}

~\\

The argument in~\cite{KMW10} is that in order for the above approximation factors to be constant, the number of rounds, $k$, has to be $\Omega(\sqrt{\log{n}})$ and $\Omega(\log{\Delta})$, respectively.

However, we argue that the above lower bounds only hold under the conditions that $k = O((\log{n})^{1/3})$ and $k = O(\sqrt{\log{\Delta}})$, respectively. This means that they cannot be applied to $k=\Theta(\sqrt{\log{n}})$ or $k=\Theta(\log{\Delta})$, and therefore do not imply the claimed bounds for constant approximation factors.

To justify our claim, we elaborate upon the proof of the theorem.
Previous lemmas in the paper\footnote{We refer the reader to~\cite{KMW10} for exact details.} show that the approximation factor of any $k$-round algorithm is $\Omega(\delta)$, where $\delta$ satisfies the following two constraints\footnote{We use the notation $\delta$ as this is the notation in~\cite{KMW10}. Notice that it is unrelated to the function $\delta$ that we use in our framework in previous sections of this paper.}. First, it holds that $n \leq 2^{2k^3+4k}\delta^{4k^2}$ and second, it holds that $\Delta = 2^{k(k+1)/2}\delta^{k+1}$.

The first constraint implies that
$$\delta \geq \frac{n^{1/4k^2}}{2^{(2k^3+4k)/4k^2}} = \frac{n^{1/4k^2}}{n^{(2k^3+4k)/4k^2\log{n}}} = n^{1/4k^2-(2k^3+4k)/4k^2\log{n}}.$$
Hence, in order to deduce that $\delta = \Omega(n^{\frac{1/4-\epsilon}{k^2}})$, it needs to hold that $(2k^3+4k)/4k^2\log{n}\leq \epsilon/k^2$. However, for this to happen, it must be that $2k^3+4k \leq 4\epsilon\log{n}$, and in particular $k$ has to be within $O((\log{n})^{1/3})$.

The second constraint implies that
$$\delta = \frac{\Delta^{1/(k+1)}}{2^{k/2}} = \frac{\Delta^{1/(k+1)}}{\Delta^{k/2\log{\Delta}}} = \Delta^{1/(k+1)-k/2\log{\Delta}}.$$
Hence, in order to deduce that $\delta = \Omega(\Delta^{\frac{1-\epsilon}{k+1}})$, it needs to hold that $k/2\log{\Delta}\leq \epsilon/(k+1)$. However, for this to happen, it must be that $k(k+1) \leq 2\epsilon\log{\Delta}$, and in particular $k$ has to be within $O(\sqrt{\log{\Delta}})$.

We emphasize again that this last step in the proof of the lower bound is different in the previous version~\cite{KMW04}, and hence we do not suggest that there is a flaw in~\cite{KMW04}.

\paragraph{Acknowledgements:} We thank Seri Khoury and Dror Rawitz for many discussions and helpful suggestions.

\bibliographystyle{alpha}
\bibliography{paper}

\newcommand{\etalchar}[1]{$^{#1}$}
\begin{thebibliography}{{\AA}FP{\etalchar{+}}09}

\bibitem[{\AA}FP{\etalchar{+}}09]{AstrandFPRSU09}
Matti {\AA}strand, Patrik Flor{\'{e}}en, Valentin Polishchuk, Joel Rybicki,
  Jukka Suomela, and Jara Uitto.
\newblock A local 2-approximation algorithm for the vertex cover problem.
\newblock In {\em Distributed Computing, 23rd International Symposium, {DISC}
  2009, Elche, Spain, September 23-25, 2009. Proceedings}, pages 191--205,
  2009.

\bibitem[{\AA}S10]{AstrandS10}
Matti {\AA}strand and Jukka Suomela.
\newblock Fast distributed approximation algorithms for vertex cover and set
  cover in anonymous networks.
\newblock In {\em {SPAA} 2010: Proceedings of the 22nd Annual {ACM} Symposium
  on Parallelism in Algorithms and Architectures, Thira, Santorini, Greece,
  June 13-15, 2010}, pages 294--302, 2010.

\bibitem[Bar00]{BarYehuda00}
Reuven Bar{-}Yehuda.
\newblock One for the price of two: a unified approach for approximating
  covering problems.
\newblock {\em Algorithmica}, 27(2):131--144, 2000.

\bibitem[BE81]{BarYehudaE81}
Reuven Bar{-}Yehuda and Shimon Even.
\newblock A linear-time approximation algorithm for the weighted vertex cover
  problem.
\newblock {\em J. Algorithms}, 2(2):198--203, 1981.

\bibitem[BEPS12]{BarenboimEPS12}
Leonid Barenboim, Michael Elkin, Seth Pettie, and Johannes Schneider.
\newblock The locality of distributed symmetry breaking.
\newblock In {\em 53rd Annual {IEEE} Symposium on Foundations of Computer
  Science, {FOCS} 2012, New Brunswick, NJ, USA, October 20-23, 2012}, pages
  321--330, 2012.

\bibitem[BR05]{BarYehudaR05}
Reuven Bar{-}Yehuda and Dror Rawitz.
\newblock On the equivalence between the primal-dual schema and the local ratio
  technique.
\newblock {\em {SIAM} J. Discrete Math.}, 19(3):762--797, 2005.

\bibitem[BYE85]{BarYehudaE1985}
Reuven Bar-Yehuda and Shimon Even.
\newblock A local-ratio theorem for approximating the weighted vertex cover
  problem.
\newblock {\em North-Holland Mathematics Studies}, 109:27--45, 1985.

\bibitem[CLRS09]{Cormen2009}
Thomas~H. Cormen, Charles~E. Leiserson, Ronald~L. Rivest, and Clifford Stein.
\newblock {\em Introduction to Algorithms, Third Edition}.
\newblock The MIT Press, 3rd edition, 2009.

\bibitem[GJ79]{GareyJ79}
M.~R. Garey and David~S. Johnson.
\newblock {\em Computers and Intractability: {A} Guide to the Theory of
  NP-Completeness}.
\newblock W. H. Freeman, 1979.

\bibitem[GKP08]{GrandoniKP08}
Fabrizio Grandoni, Jochen K{\"{o}}nemann, and Alessandro Panconesi.
\newblock Distributed weighted vertex cover via maximal matchings.
\newblock {\em {ACM} Transactions on Algorithms}, 5(1), 2008.

\bibitem[HKP01]{HanckowiakKP01}
Michal Hanckowiak, Michal Karonski, and Alessandro Panconesi.
\newblock On the distributed complexity of computing maximal matchings.
\newblock {\em {SIAM} J. Discrete Math.}, 15(1):41--57, 2001.

\bibitem[Hoc82]{Hochbaum82}
Dorit~S. Hochbaum.
\newblock Approximation algorithms for the set covering and vertex cover
  problems.
\newblock {\em {SIAM} J. Comput.}, 11(3):555--556, 1982.

\bibitem[Kar72]{Karp72}
Richard~M. Karp.
\newblock Reducibility among combinatorial problems.
\newblock In {\em Proceedings of a symposium on the Complexity of Computer
  Computations, held March 20-22, 1972, at the {IBM} Thomas J. Watson Research
  Center, Yorktown Heights, New York.}, pages 85--103, 1972.

\bibitem[KMW04]{KMW04}
Fabian Kuhn, Thomas Moscibroda, and Roger Wattenhofer.
\newblock What cannot be computed locally!
\newblock In {\em Proceedings of the Twenty-Third Annual {ACM} Symposium on
  Principles of Distributed Computing, {PODC} 2004, St. John's, Newfoundland,
  Canada, July 25-28, 2004}, pages 300--309, 2004.

\bibitem[KMW06]{KuhnMW06}
Fabian Kuhn, Thomas Moscibroda, and Roger Wattenhofer.
\newblock The price of being near-sighted.
\newblock In {\em Proceedings of the Seventeenth Annual {ACM-SIAM} Symposium on
  Discrete Algorithms, {SODA} 2006, Miami, Florida, USA, January 22-26, 2006},
  pages 980--989, 2006.

\bibitem[KMW10]{KMW10}
Fabian Kuhn, Thomas Moscibroda, and Roger Wattenhofer.
\newblock Local computation: Lower and upper bounds.
\newblock {\em CoRR}, abs/1011.5470, 2010.

\bibitem[KR08]{KhotR08}
Subhash Khot and Oded Regev.
\newblock Vertex cover might be hard to approximate to within 2-epsilon.
\newblock {\em J. Comput. Syst. Sci.}, 74(3):335--349, 2008.

\bibitem[KVY94]{KhullerVY94}
Samir Khuller, Uzi Vishkin, and Neal~E. Young.
\newblock A primal-dual parallel approximation technique applied to weighted
  set and vertex covers.
\newblock {\em J. Algorithms}, 17(2):280--289, 1994.

\bibitem[KY09]{KoufogiannakisY09}
Christos Koufogiannakis and Neal~E. Young.
\newblock Distributed and parallel algorithms for weighted vertex cover and
  other covering problems.
\newblock In {\em Proceedings of the 28th Annual {ACM} Symposium on Principles
  of Distributed Computing, {PODC} 2009, Calgary, Alberta, Canada, August
  10-12, 2009}, pages 171--179, 2009.

\bibitem[NJ75]{NemhauserT75}
George~L. Nemhauser and Leslie E.~Trotter Jr.
\newblock Vertex packings: Structural properties and algorithms.
\newblock {\em Math. Program.}, 8(1):232--248, 1975.

\bibitem[Pet16]{PettiePersonal}
Seth Pettie.
\newblock Personal communication.
\newblock 2016.

\bibitem[PR01]{PanconesiR01}
Alessandro Panconesi and Romeo Rizzi.
\newblock Some simple distributed algorithms for sparse networks.
\newblock {\em Distributed Computing}, 14(2):97--100, 2001.

\bibitem[PS09]{PolishchukS09}
Valentin Polishchuk and Jukka Suomela.
\newblock A simple local 3-approximation algorithm for vertex cover.
\newblock {\em Inf. Process. Lett.}, 109(12):642--645, 2009.

\end{thebibliography}

\end{document}